\def\@society{generic}
\renewcommand{\thesection}{\arabic{section}}
\titleformat{\section}[hang]
  {\normalfont\large\bfseries}{\thesection}{1em}{}
\def\be {\begin{equation}}
\def\ee {\end{equation}}
\def\ba {\begin{eqnarray}}
\def\ea {\end{eqnarray}}
\newtheorem{definition}{Definition}[section]
\def\be {\begin{equation}}
\def\ee {\end{equation}}
\def\ba {\begin{eqnarray}}
\def\ea {\end{eqnarray}}
\newtheorem{prop}{Theorem}
\newtheorem{cor}{Corollary}
\newcommand{\barray}{\begin{array}}
\newcommand{\earray}{\end{array}}
\newcommand{\LB}{\left(}
\newcommand{\RB}{\right)}
\begin{document}
\title{Apparent horizon and causal structure of spacetime singularities}
\author{Koushiki}
\email{koushiki.malda@gmail.com}
\affiliation{International Centre for Space and Cosmology, School of Arts and Sciences, Ahmedabad University, Ahmedabad, GUJ 380009, India}
\author{Pankaj S. Joshi}
\email{psjcosmos@gmail.com}
\affiliation{International Centre for Space and Cosmology, School of Arts and Sciences, Ahmedabad University, Ahmedabad, GUJ 380009, India}
\date{\today}
\begin{abstract} A major issue in general relativity and blackhole physics today is to determine the conditions when the spacetime singularities forming as end-states of gravitational collapse are visible to external observers, and when these are hidden within the event horizon of a black-hole. We show here that such a causal structure of singularity, in terms of its visibility or otherwise, is determined by the dynamics of the apparent horizon and trapped surfaces forming during collapse of massive matter clouds. It turns out that the relative timing of formation of trapped surfaces and the singularity plays a crucial role here. The dynamics of apparent horizon governs the visibility of singularity, and we characterize precise conditions here for spherically symmetric collapse with a general type-I matter field. This is done in terms of the existence of outgoing null geodesic families from the central singularity.  These results hold under generic initial data satisfying $\mathcal{C}^2$ regularity and the weak energy condition.

\end{abstract}
\maketitle

\section{Introduction}
Right after the general relativistic equations of motion were formulated, the theory was presented with a major hurdle in the form of existence of spacetime singularities. Firstly, the Schwarzschild solution, and later the Reissner-Nordström, Kerr and Janis-Newman-Winicour models, were all found to have this feature. Also, the dynamic space-times such as Friedmann-Lemaître-Robertson-Walker, Lemaître-Tolman-Bondi and Vaidya space-times also have singularities. Physical objects can vanish or in principle could emerge from them. The most discussed of them being the big bang singularity in cosmology. And it is not possible to predict the nature and motion of the objects emerging within the current framework of general relativity.\\ 

It turned out that spacetime singularities are inherent geometrical features of these space-times. Penrose, Hawking, and Geroch showed in their singularity theorems \cite{HE}, that space-times, even other than above specific ones, will also possess these  features that emerge from inherent structure of the general relativistic space-time manifold. It is seen that for classes of generic space-times, their maximal Cauchy development allows for families of incomplete non-spacelike geodesics, giving the manifold an edge, which is the singularity, an indispensable, albeit uncomfortable feature of the gravitation theory of Einstein . The space-time manifold structure breaks down at this edge. The visibility, stability and genericity of these singularities are the most puzzling and probably the most important issues here.\\

Among these, the visibility of a singularity has been one of the greatest puzzles in classical gravitation and it continues to be so. Schwarzschild's solution of the Einstein's equations had the first example of a space-time singularity. This is a unique spherically symmetric vacuum solution, going by Birkhoff's theorem \cite{GosEllis}. By virtue of its symmetries, it is one of the simplest examples of a singularity, however, the space-time is completely predictable. These symmetries form a one-way membrane called event horizon, that allows light to enter its internal region: {\it black-hole}, but stops light to come out, thereby making the singularity invisible for all external observers. Following this, it was assumed, by virtue of the {\it cosmic censorship conjecture (CCC)} \cite{CCC} of Penrose, that all space-time singularities must be invisible. But, very few physical situations would possess these symmetries. For example, in physical situations, gravitational objects will have extended mass, very possibly inhomogeneous energy distributions and non-zero pressure profiles. By loosing these strict symmetric conditions of the Schwarzschild configuration, it was found that the generic singularity might not be a black-hole only, but a singularity that is visible. In such situations, it became important to look for local as well as global geometric features of the model space-times.\\

The space-time singularities become specially important 
for the physical problem of gravitational collapse of a  massive star, which at the end of its life runs out of its internal fuel. Since the internal pressures no longer balance its self-gravity, it starts collapsing. If the collapse is violent enough, the end-state is a space-time singularity. We then need to study the structure of the singularity and physics in its neighbourhood. As discussed above, for such a gravitational collapse with inhomogeneous energy distribution and a non-zero pressure profile, the issue of visibility of the final singularity is of crucial importance. This can be examined in terms of escaping future directed null or time-like geodesics which are past incomplete. 
If the singularities of collapse are visible, classical deterministic structure of gravitation theory in the sense of Schwarzschild solution may not exist, and a quantum theory of gravity may describe the physics. In absence of such a theory today, we rely on the CCC \cite{CCC} to preserve predictability in general relativity. Since a proof or 
precise mathematical statement of CCC is not known yet, the only path available is to study different dynamical collapsing models, keeping their description as generic as possible towards using them to describe actual physical systems. This problem was studied by dropping the symmetries one by one and trying to find out the end-state of an {\it unhindered gravitational collapse (UGC)} and its causal or visibility properties. The assumption of homogeneity was dropped by Yodzis, Seifert and Hagen, and an example of a naked singularity was found  \cite{Yodzis:1973gha}. The results by Christodoulou \cite{Christodoulou:1984mz} and Newman \cite{Newman:1985gt} found similar examples and structure of such a singularity was analysed by Newman and Joshi \cite{Newman:1988cr} for inhomogeneous Lemaître-Tolman-Bondi dust clouds. They discussed the differentiability of mass-functions at the beginning of the collapse and their effects on the end-state singularity. For self-similar \cite{Bluman, Strogatz}  clouds, it was shown by Joshi and Dwivedi that there is a class of initial mass functions for which there exists a family of future-directed null geodesics emerging from the singularity \cite{Joshi:1991aa, Joshi:1993zg, Joshi1993, Christ}, thus making naked singularities generic outcomes of UGC. Finally, they omitted the proportionality between the temporal and radial co-ordinates and established the same results for a general inhomogeneous cloud \cite{Dwivedi:1996wf}. These works have taken into account the UGC of spherically symmetric clouds and concretely established that the initial configurations of these clouds, pressure and density profiles alone determine the visibility of the end-state singularity. To show this, nothing but the local structure of the outgoing null geodesics were probed and this showed the existence of families of such geodesics, establishing that the associated singularity is at least locally naked.\\

It was established through these studies that the space-times possessing {\it naked singularities} will not have a globally predictable structure in usual sense like black-holes. So, the definition of the horizon, separating this object from external observers, has to be different from that of the event-horizon. Extending along this line, Ashtekar and Krishnan \cite{Ashtekar:2002} gave a definition of a dynamical {\it apparent horizon (AH)}. This definition does not require information about the asymptotic future of the space-time, making it local, emerging dynamically. None of the works mentioned above, take into account the AH formation or its dynamics, rather they analyze the geodesic equations at the limit of the singularity and this naturally decodes visibility locally. In the context of the formation of a naked singularity, Goswami and Joshi studied the UGC of a collapsing cloud diluting the symmetric conditions further by introducing non-zero pressure and generic type-I matter fields \cite{HE} in the collapsing cloud. AH formation was analyzed here in relation to the visibility of the singularity \cite{Goswami:2002ds, Joshi:2004tb}. Moreover, for a homogeneous perfect fluid with non-zero pressure, it was seen that AH can be completely avoided \cite{Goswami:2004ne, Koushiki:2025dax}. In the same line, the role of shear was investigated in avoiding the trapped surface \cite{Joshi:2001xi}. For a generic type-I matter field the issue of visibility of the singularity is again studied \cite{Goswami:2006ph}, and the formation and evolution of the AH is discussed. The strength \cite{Tipler} of such a singularity was also discussed in this context \cite{Mosani:2020ena, Mosani:2023vtr}. However, the dynamics of the AH is not discussed in these works in relation to the visibility of a singularity in an explicitly analytical manner. In this work, we aim to fill this gap by building an equivalence between the differentiability of the AH curve and the local visibility of the singularity.\\

It is seen in the works involving inhomogeneous perfect fluids, both pressure-less \cite{Goswami:2002he, Goswami:2004gy}, with pressure as well for scalar and vector fields \cite{Mosani:2020ena, Goswami:2007}, that the end-state singularity forms simultaneously as the AH or does not form at all. At this very instant, if a family of outgoing null geodesics manages to escape the singularity, it becomes visible to external observers. Therefore, the local visibility requires the existence of a family of future directed null geodesics whose past lies at the singularity \cite{Joshi:1993zg}. The completeness of such geodesics in the direction of the future decides its global visibility, but it falls out of our current concern. By virtue of the spherical symmetry involved, this problem is studied in the $(t,r)$ plane and the variations in the angular directions does not affect any of its outcomes. The end-state singularity is central and the evolution of the singularity in time is given by the singularity curve and the AH curve. We show that the positivity of the first derivative of the apparent horizon curve in $r$ is equivalent to the the existence of a family of outgoing null geodesics from the singularity. Using the results from previous studies that the existence of a family of outgoing null geodesics from the singularity is the necessary and sufficient condition for the singularity to be at least locally naked, we thus establish that the central singularity is at least locally naked if the AH curve increases in $r$. We extend this result for the higher orders of the expansion of the AH curve and show that the positivity of its lowest non-vanishing order component decides the local visibility of the singularity. This work is done in full generality, only assuming spherical symmetry and type-I matter fields. The initial mass and velocity profiles are also kept generic, only subject to them being $C^2$ functions.

\section{Equations of motion in general spherical symmetry}
The metric of a spherically symmetric space-time for a type-I matter field is expressed by three free functions. In co-moving co-ordinates, its line-element is:
\be \label{gensph}
ds^2 = - e^{2\nu} dt^2 + e^{2\psi} dr^2 + R^2 d\Omega^2,
\ee
where, $\nu \equiv \nu (r,t)$, $\psi \equiv \psi (r,t)$, $R \equiv R (r,t) = a(r,t) r$ and $d\Omega^2$ is the line element of the unit two-sphere. In the units of $8\pi G =c=1$, Einstein's equations of motion, in this case are:
\ba 
\rho &=& \frac{F'}{R^2 R'} \label{E1},\\
p_r &=& -\frac{\dot{F}}{R^2 \dot{R}}\label{E2},\\
\nu' &=&\frac{2(p_\theta -p_r)R'}{(\rho+p_r)R}-\frac{p'_r}{(\rho+p_r)}\label{E3},\\
-2\dot{R}'  + R'\frac{\dot{G}}{G}+\dot{R}\frac{H'}{H} &=&0\label{E4},\\
1-\frac{F}{R} &=& G-H \label{E5}. 
\ea 
This matter field is allowed to have one time-like and three space-like eigen-vectors. In co-moving co-ordinates, the stress-energy tensor is diagonalisable and allowed to have a maximum of four independent components. On top of that, the space-time is also spherical symmetric and because of that, the eigen values associated to the $\theta$ and $\phi$ basis vectors are equal: $p_\theta = p_\phi$. The rest of the functions are defined as:
\ba  
G(r,t) = R'^2 e^{-2\psi} \;&,&\; H (r,t)= \dot{R}^2 e^{-2\nu} \label{GH}\\
F= F(r,t) &=& \frac{R}{2} (1-\nabla_a R \nabla^a R), \nonumber\\
&=& \frac{R}{2} (1-  R'^2 e^{-2\psi} + \dot{R}^2 e^{-2\nu}),\label{Misner}
\ea 
and $F(r,t)$ is the gravitational mass inside a shell of radius $r$ at a time $t$, known as Misner-Sharp mass \cite{Misner:1964}.
From Eq.(\ref{E1}), it can now be seen that $R'= 0$ has to be excluded to preserve regularity at the initial epoch. Also, this configuration has to abide by weak energy configuration. So, the projection of the components of the energy-momentum tensor along any time-like geodesic has to be positive:
\ba
T_{ij}v^i v^j &\geq & 0, \\
\implies \rho \geq 0 \;\; &\text{and}& \rho + p_i \geq 0\;, 
\ea 
and this holds for pressure components along $r^{th}$ and $\theta^{th}$ directions both. Now, using scaling freedom, at the initial epoch of the collapse, we choose:
\be \label{ainit}
a (t_i, r)=1,
\ee
and similarly at the final epoch:
\be \label{afin}
a (t_s (r), r)=0.
\ee
Since, we are discussing a collapsing scenario, the rate of change of the scale-factor is negative:
\be 
\dot{a}<0.
\ee
The collapse configurations in spherical symmetry for type-I scalar field stays the same for any element of this class. We are concerned with this class as most of the physically observable matter fields are covered in this, including homogeneous and inhomogeneous dust, homogeneous and inhomogeneous perfect fluids, also homogeneous and inhomogeneous anisotropic fluids.\\

We do not discuss the dynamics of type-II matter fields here, primarily because it has a double-null eigen-basis for its energy-momentum tensor. Also the only candidate for such a matter field is the null fluid: Vaidya solution. In these same co-ordinate system, its energy-momentum tensor has four independent components, hence they are far more complicated to solve. The other two matter fields \cite{HE} do not obey energy-conditions and also have never been observed in nature. That is why, there is no need to discuss them.

\subsection{Classes of solutions for the collapsing cloud}
It has been noted previously that the initial configuration space decides the end-state of the gravitational collapse \cite{Joshi:1991aa, Goswami:2006ph}. Also, it is important to probe the continuity and differentiability of these functions and ensure that they do not blow up at the initial epoch and also during the collapse, before the formation of the singularity. This is called the regularity condition, which ensures that the initial spatial hypersurface, from which the collapse progresses, is free of a singularity. Other than that, this study becomes irrelevant. So, the dynamics of the collapse depends on the initial configuration of the cloud, specifically the initial configuration space spanned by the functions: $\rho_0 (r), \; p_{r0} (r),\; p_{\theta 0} (r), \;\nu_0 (r)$ and $ \psi_0 (r)$ at the initial epoch $t_i$. As the equations of motion are coupled partial differential equations, some assumptions are to be made to solve these, keeping the generality intact as much as possible. The assumptions to choose these functional forms are listed below:
\begin{itemize}
    \item The mass function is assumed to have the general form:
    \be \label{MF}
F(t,r) = r^3 \mathcal{M} (r,a),
    \ee 
    with $\mathcal{M}>0$ and it being at least $C^1$ for $r\to 0$ and at least $C^2$ for $r>0$. Such a function is chosen so that $\mathcal{M} (r,a)$ resembles the density function in general. 
     \item While the energy density is decided by the mass function, the velocity profile or the pressure of the collapsing cloud is dependent on $\nu$ and to restrict its differentiability, $\nu'$ is assumed to be:
     \be \label{nu'}
\nu' (r,a) = A,_a (r,a) R.
     \ee
     It can be easily seen that $\nu'_0 (r) = A (r,1)$ at $t = t_i$. This is to ensure that the polarity of $R'$ does not change and it must be well-behaved at the central singularity as $t\to t_s$ and $r\to 0$. In this case, $\nu$ can be written as:
     \be \label{nu}
\nu (a,r) = r^2 h_1 (a,r).
     \ee 
     In this way, $\nu (a,r)$ is a $C^2$ function at all regular epochs at the center of the cloud. Now, using Eq.(\ref{nu'}) and Eq.(\ref{nu}), it can be seen that:
     \be \label{A}
A (r,a) = r h_2 (r,a).
     \ee
     The center of the cloud must be regular during all the epochs of evolution before the formation of the singularity. For this requirement, $h_1 (a,r)$ and $h_2 (a,r)$ must be well-behaved at all of these epochs.
     \item There can arise scenarios for a collapsing cloud when $R'<0$. This means that the shells with greater physical radii collapse at a larger speed than the ones with smaller radii, thus making the shells to cross during collapse. If $R'=0$, two collapsing shells can be marked by the same co-moving radius and the co-ordinate system breaks down at these points. This makes the Cauchy extension through these points impossible as the $R\to r$ mapping does not remain one-to-one in such cases causing multiple problems like geodesic break-down and numerical instabilities. Also, from Eq.(\ref{E1}), it can be seen that in such cases the energy density blows up in regular epochs, giving a false singularity. These singularities, therefore, arise from the badness of the co-ordinate system and termed as weak singularities \cite{eardley-smarr1979, hellaby-lake1985, clarke-krolak1985}, not a genuine one, in the sense of Tipler \cite{Tipler}. At the beginning of the collapse, using the scaling freedom, invoked in Eq.(\ref{ainit}), $R' = a + ra' (r,t)$ has the range $1\geq R'>0$.  
     \item Substituting Eq.(\ref{nu'}) in Eq.(\ref{E4}), a constraint on the function $G$ is obtained as:
\be \label{G}
G (r,a) = b(r) e^{2rA (r,A)},
\ee 
where, $b(r)$ is the constant of integration. Physically, it represents the velocity profile of the particles constituting the collapsing cloud, generating the radial pressure. This constraint also ensures the commutativity of the one forms $\frac{\partial}{\partial r}$ and $\frac{\partial}{\partial t}$. $b(r)$ has to be regular at the center of the cloud at all regular epochs making:
\be \label{b}
b(r) = 1+ r^2 b_0 (r).
\ee 
This ensures that radial pressure is finite before the singularity is formed and at the neighbourhood of the singularity it blows up. Using Eq.(\ref{G}) in Eq.(\ref{nu'}) and Eq.(\ref{E3}), it can be seen that Eq.(\ref{b}) is necessary to ensure the regularity of the tangential pressure as well. This same constraint makes the tangential pressure also blow up as the collapse reaches its final singular state.
\end{itemize}
With these general forms at hand, the dynamics of the collapsing cloud can be analysed. Using Eq.(\ref{MF}) Eq.(\ref{G}), Eq.(\ref{b}) in Eq.(\ref{E5}), it can be written:
\ba 
R'^2 e^{-2\psi} - \dot{R}^2 e^{-2\nu}= 1-\frac{\mathcal{M}r^3}{R}\nonumber\\
\implies (1+ r^2 b_0 (r)) e^{2rA}- \dot{R}^2 e^{-2\nu} = 1-\frac{\mathcal{M}r^3}{R}\nonumber\\
\implies \sqrt{R} \dot{R} = -e^\nu \sqrt{(1+ r^2 b_0 (r))R e^{2rA} -R + r^3 \mathcal{M}} \label{Rdot},
\ea 
where, the negative root is only considered to account for the collapsing scenario. To simplify the calculations further, $e^{2rA}$ is written as:
\ba \label{g}
g (r,a) &=& \frac{e^{2rA} -1}{r^2}\\
\implies g (r,a) &=& \frac{2A}{r} + 4A^2 +....\nonumber\\
\implies g (r,a) &=& 2 h_2 (r,a) + \mathcal{O}(r^2), \label{happr}
\ea
ignoring higher order terms and using Eq.(\ref{A}). Using this and Eq.(\ref{Rdot}) it can be seen that:
\be \label{adot}
\dot{a}\sqrt{a} = - \sqrt{ab_0 e^{(rA +\nu)} + e^{2\nu} (ag + \mathcal{M})}.
\ee
This expresses the change of the scale-factor in time, thereby, is the key equation explaining the entire kinematics of the collapse. The initial epoch of the collapse is marked by $a=1$ and any later epoch is marked by $a$, whose value is in the range $0\leq a<1$. The time taken by the configuration to reach from the former to the later, is now calculated as:
\be \label{time}
t(a,r) = \int^1_a \frac{\sqrt{a} da}{\sqrt{ab_0 e^{(rA +\nu)} + e^{2\nu} (ag + \mathcal{M})}}.
\ee 
In this equation, $r$ is a constant and thus $t(a,r)$ can be expanded in Taylor series for small values of $r$ or very close to the center of the cloud:
\be \label{texpansion}
t(a,r) = t (a,0) + r \chi (a) + \mathcal{O} (r^2).
\ee 
The term $\chi$ in the expansion of $t(a,r)$ in Eq.(\ref{texpansion}) is:
\be \label{chi}
\chi (a) \equiv \left(\frac{\partial t (a,r)}{\partial r}\right)_{r=0}.
\ee
In the limit $r\to 0$, using Eq.(\ref{nu}), it can be seen that $\nu$ vanishes. $A$ also vanishes in this limit, and it can be seen using Eq.(\ref{A}). Using these limiting values and replacing Eq.(\ref{time}) in Eq.(\ref{chi}), $\chi (a)$ :
\be \label{CHI1}
\chi (a) = -\frac12 \int_a^1 \frac{\sqrt{a}(a b_{0,r}\vert_{r=0} + a g,_r\vert_{r=0}  + \mathcal{M},_r (r,a) \vert_{r=0})}{{(a b_0\vert_{r=0} + a g\vert_{r=0} + \mathcal{M}(0,a))}^\frac32}~ da.
\ee 
The functions of Eq.(\ref{CHI1}) and their derivatives have to be analysed to comment on the nature of the AH and it is essential to this work. So, for brevity and for ease of calculations, we now define: 
\ba \label{fns}
\mathcal{M}_0 (a) = \mathcal{M}(r,a)\vert_{r=0}~&,&~~ \mathcal{M}_1 (a) = \mathcal{M},_r (r,a) \vert_{r=0}~, \nonumber\\
b_{00} = b_{0}\vert_{r=0}~&,&~~ 
b_1 = b_{0,r}\vert_{r=0}~,\nonumber\\
~~ g_0 = g\vert_{r=0} ~~ &\text{and}& ~~g_1 = g,_r\vert_{r=0}~.
\ea 
With these, Eq.(\ref{CHI1}) becomes:
\be \label{CHI}
\chi (a) =-\frac12 \int_a^1 \sqrt{a} da \frac{ab_1 + ag_1 + \mathcal{M}_1 (a)}{{(ab_{00} + a g_0 + \mathcal{M}_0 (a))}^\frac32}.
\ee 
Eq.(\ref{time}) describes the collapse temporally and hence, the time of formation of the singularity can also be defined by it, which is called the {\it singularity curve}. Dynamically, the singularity is formed when the center of the cloud is described by $zero$ physical radius. Since shell-crossings are ruled out, a singularity in a spherically symmetric collapsing cloud can only form at its center. It can now be seen that the singularity curve is a function of $r$ only. In terms of the scale-factor, the central singularity can be marked by the former vanishing and it is described in Eq.(\ref{afin}). With this, the time of formation of the singularity can be expanded near the center of the cloud as:
\be \label{tsingexp}
t_s (r) = t_{s0} + r \chi (0) + \mathcal{O} (r^2).
\ee
The Taylor expansion works here because the singularity is approached only in its limit. It is clear from this equation that $t_{s0}$ is the time of formation of the central singularity. Using Eq.(\ref{nu}) and Eq.(\ref{A}), as $r\to 0$, this can be calculated as:
\be \label{tsing}
t_{s0} = \int_0^1 \sqrt{\frac{a}{ab_{00} + a g_0 + \mathcal{M}_0 (a)}} da.
\ee
For the central singularity to form in a finite time, $t_{s0}$ has to be well-defined and the denominator cannot go to zero and thus:
\be \label{cond}
ab_{00} + a g_0 + \mathcal{M}_0 (a) >0.
\ee
It is important for the central singularity to form in a finite time, otherwise it is unphysical. Therefore, not worth studying other than a mere mathematical curiosity.  

\section{Apparent horizon and its dynamics} 
In this work, we want to devise a way to comment in a generic manner about the visibility of a collapsing cloud. As mentioned earlier, for a generic collapsing cloud, strict symmetric conditions like homogeneity or the absence of pressure is unrealistic. But completely losing all symmetric conditions make the system of equations impossible to solve analytically. Therefore, this work is done in spherical symmetry and for type-I matter fields. It is important to mention here that spherical symmetry is generic enough for dense collapsing bodies. However, with this assumption, the generation of gravitational waves by the collapsing body is not possible \cite{Berti:2009kk}. But, to move away from spherical symmetry would make the problem almost impossible to be handled analytically. So, we stick to it. Also type-I matter fields \cite{HE} describe all naturally available matter fields, hence very generic. Even with these symmetric considerations, the asymptotic future is not predictable as the space-times become non globally hyperbolic \cite{Joshibook1, Joshibook2}. In this situation, the visibility of the singularity cannot be defined by the event horizon. For such situations, where the global topology becomes indeterminate, a horizon has to be defined locally. Ashtekar and Badrikrishnan \cite{Ashtekar:2003hk} gave such a definition:   
\begin{definition}{\bf Apparent horizon:}
{\it A smooth, three-dimensional sub-manifold $\mathcal{H}$ in a spacetime $(\mathcal{M}, g)$
is said to be a marginally outer trapped surface or an Apparent Horizon (AH) if it is foliated by a preferred family of $2$-spheres such that, on each leaf $\mathcal{S}$, the expansion $\Theta_{l}$ of the outgoing null
normal $k_a$ vanishes and the expansion $\Theta_{n}$  of the ingoing null normal $l_a$ is
strictly negative} \cite{NF}.
\end{definition}\label{mots}
For a spherically symmetric space-time, AH is given by:
\be \label{ah} 
\Theta_l = g^{ij} R,_i R,_j =0,
\ee
where, $R$ is the physical radius of the collapsing cloud and $,_i$ denotes the covariant derivative with respect to the $i$th co-ordinate. Given Eq.(\ref{gensph}) this becomes:
\ba 
R'^2 e^{-2\psi} - \dot{R}^2 e^{-2\nu} = 0,
\ea
which, upon using Eq.(\ref{GH}) and Eq.(\ref{Misner}), locates the AH:
\ba 
 G = H 
\implies \frac{F}{R}=1,
\ea
and $F(r)$ can be clearly identified as the Misner-Sharp mass function \cite{Misner:1964}. Now that the kinematics of the system is well-defined, the dynamics of the visibility can be figured out using the mass and velocity functions. For that, the mass-function is determined at the AH using Eq.(\ref{MF}) and the expression of the scale-factor :
\be \label{mots}
\mathcal{M} (r,a) r^2 =a.
\ee 
In the previous section, it was discussed that $\mathcal{M} (0,a) \equiv \mathcal{M}_0 (a)$ is a well-behaved function at the center of the cloud. And Eq.(\ref{mots}) gives the curve of the AH. Using this and Eq.(\ref{time}), the time of formation of the AH can be calculated as:
\be \label{timeAH}
t_{AH}(r) = \int^1_{a_{AH}(r)} \frac{\sqrt{a} da}{\sqrt{ab_0 e^{(rA +\nu)} + e^{2\nu} (ag + \mathcal{M})}}.
\ee 
This is called the {\it AH curve}, which is just like the singularity curve, a function of only $r$. Therefore, with $\mathcal{M}_0 (a)$ well defined, the AH curve can also be expanded close to $r\to 0$, in a similar way of Eq.(\ref{texpansion}) and can be written as:
\be \label{Ahexp}
t_{AH} (r) = t_{AH0} + r \Psi (0) + \mathcal{O} (r^2).
\ee 
The general time-curve of Eq.(\ref{time}) can be integrated from the start of the collapse $(a=1)$ to the occurrence of the singularity $(a=0)$, via the time of formation of the AH $(a=a_{AH})$. This gives the singularity curve as:
\ba \label{sinah}
t_s(r) = &-&\int_1^{a_{AH} (r)} \frac{\sqrt{a} da}{\sqrt{ab_0 e^{(rA +\nu)} + e^{2\nu} (ag + \mathcal{M})}}\nonumber \\
&-& \int_{a_{AH} (r)}^0 \frac{\sqrt{a} da}{\sqrt{ab_0 e^{(rA +\nu)} + e^{2\nu} (ag + \mathcal{M})}},
\ea 
where, $a_{AH} (r)$ is given by Eq.(\ref{mots}). Comparing this with Eq.(\ref{timeAH}), it can be seen that the first term in this equation represents the time of formation of the AH. Thus, this equation becomes:
\be \label{TSIN}
t_s(r) =  - \int_{a_{AH} (r)}^0 \frac{\sqrt{a} da}{\sqrt{ab_0 e^{(rA +\nu)} + e^2\nu (ag + \mathcal{M})}} +t_{AH} (r).
\ee 
With the singularity and the AH curve at our disposal, we move to the main goal of this work: to establish a one-to-one relationship between the singularity and AH curve and the local visbility of the singularity for a generic type-I collapsing cloud in spherical symmetry. We show that in the three following steps:
\begin{itemize}
    \item For generic clouds, the singularity and AH forms at the same instant. Their slopes at the center are identical as well. 
    \item In this case, the higher order term, or the slope of these curves, have to be studied to comment on the local visibility. 
    \item The sign of the lowest order non-vanishing term of the AH curve decides the local visibility of the singularity.
\end{itemize}
With this, we present our first proposition:
\begin{prop}\label{theorem1}
    For a generic inhomogeneous initial data, the times of formation of the central singularity and the AH are the same at the end of a UGC. Moreover, the expansion of their curves in the neighbourhood of the singularity are the same. 
\end{prop}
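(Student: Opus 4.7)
My plan is to establish the stronger statement that the difference $t_s(r) - t_{AH}(r)$ vanishes to order $r^2$ near the centre, from which both $t_{s0} = t_{AH0}$ and the equality of the first-order coefficients $\chi(0) = \Psi(0)$ in the expansions \reff{tsingexp} and \reff{Ahexp} follow immediately. The starting point is the identity \reff{TSIN}, which rearranges to
\be
t_s(r) - t_{AH}(r) = \int_0^{a_{AH}(r)} \frac{\sqrt{a}\, da}{\sqrt{ab_0\, e^{rA+\nu} + e^{2\nu}(ag + \mathcal{M})}}.
\ee
Everything reduces to controlling this residual integral as $r \to 0$.

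The first step is to pin down the behaviour of $a_{AH}(r)$ near the centre. The AH condition \reff{mots}, written as $\mathcal{M}(r,a)\, r^2 = a$, forces $a_{AH}(r) \to 0$ as $r \to 0$, because $\mathcal{M}(0,a) = \mathcal{M}_0(a)$ is finite by hypothesis \reff{MF}. Already this is enough to conclude $t_{AH0} = t_{s0}$: the lower limit in \reff{timeAH} matches that in \reff{tsing} in the limit, and the integrands coincide at $r = 0$. Substituting the ansatz $a_{AH}(r) = c_2 r^2 + c_3 r^3 + \cdots$ into the AH condition and using the $\mathcal{C}^2$ regularity of $\mathcal{M}$ then fixes $c_2 = \mathcal{M}_0(0) > 0$, so $a_{AH}(r) = \mathcal{O}(r^2)$.

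The second step is a uniform bound on the integrand for small $(r,a)$. Because $\nu(0,a) = 0$ and $A(0,a) = 0$ by \reff{nu} and \reff{A}, and because $b_0$ and $g$ stay bounded, the denominator is asymptotic to $\sqrt{\mathcal{M}_0(a)}$ as $a \to 0$; by the finite-time condition \reff{cond} this is bounded away from zero. Hence the integrand is at most $C\sqrt{a}$ uniformly in a neighbourhood of the centre, and combining with $a_{AH}(r) = \mathcal{O}(r^2)$ gives
\be
|t_s(r) - t_{AH}(r)| \leq \tfrac{2C}{3}\, a_{AH}(r)^{3/2} = \mathcal{O}(r^3).
\ee
This is stronger than what the proposition demands: it forces the constant and the linear coefficients of $t_s(r)$ and $t_{AH}(r)$ at $r = 0$ to coincide.

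The main technical point I anticipate as the sticking step is verifying that $\mathcal{M}_0(a)$ is bounded away from zero all the way down to $a = 0$, since this is what prevents the denominator from blowing up and keeps the elementary $\sqrt{a}$-estimate valid. This positivity is implicit in the regularity assumption on the mass function \reff{MF} together with condition \reff{cond}; once it is granted, every other step is routine power-counting. No invocation of the implicit function theorem is really needed, because the direct series substitution for $a_{AH}(r)$ in the AH condition works cleanly under the $\mathcal{C}^2$ hypothesis on $\mathcal{M}$ and immediately yields the $\mathcal{O}(r^2)$ scaling that drives the whole argument.
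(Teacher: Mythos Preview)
Your argument is correct and in fact sharper than the paper's. Both proofs start from the identity \reff{TSIN} and use $a_{AH}(r)\to 0$ from \reff{mots} to conclude $t_{s0}=t_{AH0}$. For the first-order coefficients, however, the paper proceeds term by term: it observes that $\chi(0)$ and $\Psi(0)$ are both given by the integral formula \reff{CHI}, differing only in the lower limit of integration, and that this lower limit tends to zero in both cases, so the two coincide. Your route is different: you establish the single quantitative bound $t_s(r)-t_{AH}(r)=\mathcal{O}(r^3)$ by combining the explicit scaling $a_{AH}(r)=\mathcal{M}_0(0)\,r^2+\cdots$ with the uniform $C\sqrt{a}$ estimate on the integrand, and then read off both the zeroth- and first-order equalities at once. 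This buys you a cleaner argument that sidesteps the Leibniz-rule boundary term arising when one differentiates $t_{AH}(r)$ under the integral sign; that term does vanish (since $\sqrt{a_{AH}}\,\tfrac{da_{AH}}{dr}=\mathcal{O}(r^2)$), but the paper does not make this explicit. Your approach also makes the $\mathcal{C}^2$ regularity and the condition \reff{cond} do visible work. The trade-off is that the paper's method, by identifying $\Psi(0)$ with the explicit integral \reff{CHI}, delivers a \emph{formula} for the common first-order coefficient rather than merely its equality with $\chi(0)$; that formula is what feeds into Theorem~\ref{theorem2}.
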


\begin{proof}
To compare the singularity curve with the AH curve at the center of the cloud, we take $r\to 0$ in Eq.(\ref{TSIN}):
\ba \label{ts=tah}
\lim_{r\to 0} t_s(r) &=&   -\lim_{r\to 0}  \int_{a_{AH} (r)}^0 \frac{\sqrt{a} da}{\sqrt{ab_0 e^{(rA +\nu)} + e^{2\nu} (ag + \mathcal{M})}}  \nonumber \\ &+& \lim_{r\to 0}  t_{AH} (r).
\ea 
In this limit, using Eq.(\ref{mots}), it can be seen that $a_{AH} (r)\to 0$.  As $r\to 0$, the numerator of the first term vanishes. In this limit, $\nu \to 0$, using Eq.(\ref{nu}) and $A \to 0$ using Eq.(\ref{A}). Using these limiting values of $\nu$ and $A$ and Eq.(\ref{cond}), it can also be seen that the denominator of this term is finite. So, this term vanishes in the limit $r\to 0$ and we get:
\ba \label{eql} 
\lim_{r\to 0} t_s(r) &=& \lim_{r\to 0}  t_{AH} (r)\nonumber\\ 
\implies t_{s0} & =& t_{AH0}. \label{eql}
\ea 
It is clear now that the central singularity and the AH forms at the same instant. Therefore, the situation is more ambiguous than that of a black-hole, where the AH forms before the formation of the singularity with the trapped surfaces completely covering it. So, it stops outgoing null congruence from even forming near it, making it necessarily hidden. But, here the situation is different as the existence of an outgoing null congruence cannot be ruled out based only on the fact that the epoch of formation of the singularity and the trapped surfaces coincide. 
Using the result of Eq.(\ref{eql}) in Eq.(\ref{Ahexp}), it becomes:
\be \label{motsexp}
t_{AH} (r) = t_{s0} + r \Psi (0) + \mathcal{O} (r^2).
\ee 
Since the zero-th order term in the expansion of $t_{s0}$ and $t_{AH0}$ cannot resolve the matter of visibility of the central singularity in this case, the next higher order term has to be looked upon to comment on the same. For this, Eq.(\ref{tsingexp}) has to be compared to Eq.(\ref{motsexp}). By doing so, it becomes evident that $\chi (0)$ and $\Psi (0)$ both have the same general form given in Eq.(\ref{CHI}). For $\chi (0)$, the lower limit of the integration is clearly $a \to 0$, as given in Eq.(\ref{afin}). Whereas, $\Psi (0)$ is given by taking the lower limit of the integration to $a_{AH} (r)$ as $r\to 0$. With Eq.(\ref{mots}), it can be seen that in this limit $a_{AH} = 0$. Therefore, it can be said that:
\be \label{WWW}
\lim_{r\to 0} \frac{dt_s (r)}{dr} = \chi (0) = \lim_{r\to 0} \frac{dt_{AH} (r)}{dr}. 
\ee 
{\it If the singularity curve is increasing at the center of the collapsing cloud, then the AH curve must also be increasing and vice-versa.} 
\end{proof}
However, this still leaves the question whether this is sufficient for the central singularity to be locally naked.
\subsection{The necessary and sufficient condition for a singularity to be locally naked} \label{result}

The concept of a naked singularity is relatively very new in classical relativity. It is defined as \cite{Joshi:1991aa}:
\begin{definition}{\bf Naked singularity:}
A singularity is naked if there are future directed non space-like geodesics in the space-time
with their past end-point in the neighbourhood of the singularity. If the causal future of such geodesics are incomplete, then the singularity is locally naked. If these geodescis are complete in their future, then the singularity is globally naked. 
\end{definition}
So, a singularity is certainly locally naked if $\exists$ a family of causal geodesics emerging out of the singularity. We focus our investigation to radially outgoing geodesics only, as this significantly reduces the complications of the analysis. Also, existence of null geodesics make room for the existence of time-like geodesics. And in spherical symmetry, analysing the radial geodesics is enough, as the rest do not bring out new Physics. Now, the null geodesic equation must have a root as $t\to t_s$ and $r\to 0$. And for the geodesic to be outgoing the tangent to it must be positive. Hence, the root to the geodesic equation must also be positive:
\be 
\frac{dt}{dr} = e^{\psi -\nu}.\label{RNG}
\ee 
This is derived from Eq.(\ref{gensph}) with $d\theta = d\phi =0$, without disturbing the generality of the solution. By transforming $u\to r^\alpha$ with $\alpha >1$, this equation becomes:
\be 
\frac{dR}{du} =\frac{R'}{\alpha r^{\alpha -1}} \left[ 1+\frac{\dot{R}}{R'}e^{\psi-\nu}\right].\label{RNG1}
\ee 
From this equation, it can be seen that with $\alpha <1, \;\; \frac{R'}{\alpha r^{\alpha -1}}$ blows up with no shell-crossings, making $\frac{dR}{du}$ blow up as well in the limit $r\to 0$. So, we constraint it: $\alpha > 1$. Now, using Eq.(\ref{E5}) and Eq.(\ref{GH}) and $\alpha = \frac53$, this equation becomes:
\be \label{dRdu}
\frac{dR}{du} =\frac35 \LB \frac{R}{u} + \frac{a'\sqrt{a}}{\sqrt{\frac{R}{u}}} \RB \LB \frac{1-\frac{F}{R}}{\sqrt{G} (\sqrt{G}+\sqrt{H})}\RB.
\ee 
This equation is of the outgoing null geodesic in the $(R,u)$ plane. Near the singularity, using L'H\^{o}pital's rule, this becomes:
\be\label{ORNG}
x_0 \equiv \lim_{\substack{t\to t_s\\ r\to0}} \frac{R}{u} = \lim_{\substack{t\to t_s\\ r\to0}} \frac{dR}{du}.
\ee
It can be clearly seen that near the singularity, $x_0$ is the tangent to the outgoing radial null geodesic. So, $x_0$ is the root of Eq.(\ref{dRdu}) near the singularity. To evaluate its value, the value of $a'\sqrt{a}$ also has to be known. Applying Leibniz rule on Eq.(\ref{adot}):
\ba
\sqrt{a}~\frac{\partial a}{\partial r} ~\frac{\partial r}{\partial t}= - \sqrt{ab_0 e^{(rA +\nu)} + e^{2\nu} (ag + \mathcal{M})}.
\ea 
For $r\to 0$ and for constant $a$ surfaces, the bi-variate function $t(a,r)$ can be expanded as in Eq.(\ref{texpansion}). Hence, in this limit $r\to 0$, this equation becomes:
\ba
a' ~\sqrt{a} &=& \sqrt{ab_{00}+ a g_0 + \mathcal{M}_0 (a)}~ \chi (a) + \mathcal{O} (r) \label{aprime}.
\ea 
To evaluate the root of Eq.(\ref{dRdu}), $\frac{dR}{du}$ has to be found out in the limit $(t,r)\to (t_s,0)$, which has earlier been named as $x_0$ in Eq.(\ref{ORNG}). Its positivity ensures the existence of a family of outgoing null geodesics from the singularity \cite{Dwivedi:1996wf}. And to evaluate $x_0$, the metric functions have to be evaluated as $(t,r)\to (t_s,0)$:
\begin{itemize}
    \item Substituting Eq.(\ref{A}) and Eq.(\ref{b}) in Eq.(\ref{G}), it can be seen that $G(t_{s0},0)\to 1$.
    \item From Eq.(\ref{nu}), it can be seen that $\nu \to 0$ in this limit. Also, from Eq.(\ref{GH}), $H$ has the value $e^{-2\nu} \dot{R}^2$. Therefore, in this limit $H\to \dot{R}^2$. From Eq.(\ref{A}), the limiting value of $A$ can be found out be $0$. And Eq.(\ref{g}) gives  $g\to 0$. Now, using Eq.(\ref{adot}), it can be seen that $H\approx \frac{r^2}{a}$. So, in the limit $(t,r)\to (t_{s0},0)$, $H\to 0$. 
    \item Using the values of $G$ and $H$ in the limit $(t,r)\to (t_{s0},0)$, obtained above, in Eq.(\ref{E5}), the limiting value of $\frac{F}{R}$  can be seen to be vanishing. 
\end{itemize}
With these, now we reach our second theorem:
\begin{prop}\label{theorem2}
The positivity of $\chi_0$ is the necessary and sufficient condition for $x_0$ to be positive, thus making $\chi_0 >0$ the necessary and sufficient condition for the central singularity to be at least locally naked.
\end{prop}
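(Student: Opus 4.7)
The plan is to substitute the limiting values of the metric functions tabulated just before the proposition statement into the outgoing null geodesic equation Eq.(\ref{dRdu}), and use Eq.(\ref{ORNG}) to reduce it to a single algebraic equation relating $x_0$ and $\chi_0$. Concretely, the tangent $x_0$ must satisfy $x_0 = \lim_{(t,r)\to(t_{s0},0)} dR/du$, so the right-hand side of Eq.(\ref{dRdu}) has to be evaluated in this double limit and the resulting value set equal to $x_0$, producing a fixed-point condition.

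First I would compute the geometric factor. From the bullet list preceding the theorem, $G\to 1$, $H\to 0$ and $F/R\to 0$, so
\begin{equation}
\frac{1-F/R}{\sqrt{G}\,(\sqrt{G}+\sqrt{H})}\longrightarrow 1
\end{equation}
as $(t,r)\to(t_{s0},0)$. Second, I would handle the factor $a'\sqrt{a}$ using Eq.(\ref{aprime}): along the geodesic $r\to 0$ and $a\to a_{AH}(r)\to 0$ by Eq.(\ref{mots}), so $ab_{00}+ag_0+\mathcal{M}_0(a)\to \mathcal{M}_0(0)$ and $\chi(a)\to\chi(0)\equiv\chi_0$, giving $a'\sqrt{a}\to \sqrt{\mathcal{M}_0(0)}\,\chi_0$. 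Finally $R/u\to x_0$ by definition.

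Putting these ingredients into Eq.(\ref{dRdu}) yields the algebraic equation
\begin{equation}
x_0 = \frac{3}{5}\left(x_0 + \frac{\sqrt{\mathcal{M}_0(0)}\,\chi_0}{\sqrt{x_0}}\right),
\end{equation}
which after clearing denominators rearranges to $2\,x_0^{3/2}=3\sqrt{\mathcal{M}_0(0)}\,\chi_0$. Since $\mathcal{M}_0(0)>0$ by Eq.(\ref{cond}), this admits a strictly positive real root $x_0$ if and only if $\chi_0>0$, and the root is unique when it exists; this is the promised equivalence $\chi_0>0 \Leftrightarrow x_0>0$. The local nakedness conclusion then follows from the Dwivedi--Joshi criterion \cite{Dwivedi:1996wf} invoked just above the proposition: positivity of $x_0$ is precisely the condition under which a one-parameter family of future-directed outgoing radial null geodesics emerges from the central singularity, which by the definition of a naked singularity given earlier makes the singularity at least locally naked.

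The step I expect to be most delicate is the justification for substituting Eq.(\ref{aprime}) inside the double limit $(t,r)\to(t_{s0},0)$. Eq.(\ref{aprime}) was derived at fixed $a$ by letting $r\to 0$, whereas along an outgoing null geodesic both coordinates tend to zero simultaneously, and the ratio in which they do so is precisely what the fixed-point equation is meant to determine. The cleanest way around this is to invoke the $C^2$ regularity assumed on $\mathcal{M}$, $b_0$ and $h_2$ to argue that the $\mathcal{O}(r)$ remainder in Eq.(\ref{aprime}) is uniform in $a\in[0,1]$, so that the iterated limits agree; alternatively one can apply L'H\^opital's rule directly to Eq.(\ref{dRdu}) in the same spirit used to produce Eq.(\ref{ORNG}), sidestepping the need to control $a'\sqrt{a}$ pointwise on a full neighbourhood of the singularity.
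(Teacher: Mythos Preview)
Your proposal is correct and follows essentially the same route as the paper: substitute the limiting values $G\to1$, $H\to0$, $F/R\to0$ into Eq.(\ref{dRdu}), use Eq.(\ref{aprime}) with $a\to0$ to replace $a'\sqrt{a}$ by $\sqrt{\mathcal{M}_0(0)}\,\chi(0)$, and solve the resulting fixed-point equation to obtain $x_0^{3/2}=\tfrac{3}{2}\sqrt{\mathcal{M}_0}\,\chi(0)$. The paper then additionally transforms back to the $(t,r)$ plane to display the emerging curve $t-t_s(0)=x_0\,r^{5/3}$ explicitly, whereas you close by invoking the Dwivedi--Joshi criterion directly; the content is the same, and your remarks on the delicacy of the double limit are a welcome addition that the paper does not spell out.
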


\begin{proof}
Substituting the limiting values of the functions of $G$, $H$ and $\frac{F}{R}$ as $(t,r)\to (t_s,0)$, as discussed above in Eq.(\ref{dRdu}), we get:
\be 
x_0 = \frac35 \left( x_0 + \frac{a' \sqrt{a}}{\sqrt{x_0}}\right).
\ee  
Now, we substitute $a' \sqrt{a}$ from Eq.(\ref{aprime}) in this equation. The higher order terms in $r$ vanishes in this limit and hence, it can be seen that:
\ba
    x_0 &=& \frac{3}{5} \LB x_0 + \frac{\sqrt{\mathcal{M}_0} ~\chi (0)}{\sqrt{x_0}}\RB \nonumber\\
\label{X} \implies {x_0} &=& {\LB \frac{3}{2} \sqrt{\mathcal{M}_0} \chi (0)\RB}^\frac23
\ea
Eq.(\ref{X}) is the equation of the outgoing radial null geodesics in the $(R,u)$ plane. And $x_0$ being positive necessarily means that $\chi_0$ is positive as well, and vice-versa. Shifting from $(t,r)$ co-ordinates to $(R,u)$ does not affect the causal structure or visibility criterion. So, these results are valid for all good co-ordinate systems. Now, by replacing $u=r^\alpha$, this curve can be retrieved in the $(t,r)$ plane:
\be \label{singcurve}
t-t_s (0) = x_0 r^\frac53.
\ee 
From this equation, it is clear that $x_0$ is a solution to the null geodesic equation at the central singularity. Also,  if and only if $x_0>0$, then with increasing $r$, $t$ also increases. In other words with a positive $x_0$, there is a family of outgoing radial null geodesics emerging out of the central singularity. \textit{So, $\chi (0)>0$ is the necessary and sufficient condition for the singularity to be at least locally naked.}
\end{proof}

\begin{cor}
Theorem \ref{theorem2} also shows that an increasing AH necessarily implies that the singularity curve is also increasing at the center. Therefore, an increasing AH curve or an increasing singularity curve at the center are the necessary and sufficient conditions for the singularity to be at least locally naked.
\end{cor}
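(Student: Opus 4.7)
The plan is to observe that the corollary is an immediate bookkeeping consequence of Theorems \ref{theorem1} and \ref{theorem2}, and to assemble the chain of implications explicitly. First I would recall the identity established in Eq.(\ref{WWW}) of Theorem \ref{theorem1}, namely $\chi(0) = \Psi(0)$: the first-order Taylor coefficients of the singularity curve $t_s(r)$ and the AH curve $t_{AH}(r)$ at $r=0$ are numerically equal. Hence one of them is strictly positive iff the other is, which already gives the first half of the corollary (increasing AH at the center forces increasing singularity curve at the center, and vice versa).

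Next I would invoke Theorem \ref{theorem2}, which gives $x_0 = \LB \tfrac{3}{2}\sqrt{\mathcal{M}_0}\,\chi(0)\RB^{2/3}$ as the tangent to the outgoing radial null geodesic at the central singularity and states that $\chi(0)>0$ is necessary and sufficient for a family of such geodesics, i.e. for the singularity to be at least locally naked. Stitching the two together: an increasing AH curve at $r=0$ means $\Psi(0)>0$, which by Theorem \ref{theorem1} yields $\chi(0)>0$, which by Theorem \ref{theorem2} yields $x_0>0$ and hence local nakedness. The converse direction traverses the same chain backwards. This completes the logical content of the corollary as stated.

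If one also wants the higher-order statement flagged in the introduction — that the sign of the lowest non-vanishing coefficient in the Taylor expansion of $t_{AH}(r)$ decides local visibility — I would iterate the same argument. The key point is that in Eq.(\ref{TSIN}) the correction term $\int_{a_{AH}(r)}^{0}$ vanishes as $r\to 0$ because its integration range collapses while the integrand stays bounded by Eq.(\ref{cond}); differentiating this correction in $r$ and using $a_{AH}(0)=0$ together with the smoothness of the integrand in Eq.(\ref{mots}) shows that each successive derivative in $r$ of $t_s(r)-t_{AH}(r)$ at $r=0$ contains a factor that vanishes. Thus $t_s$ and $t_{AH}$ agree term-by-term in $r$ up to (and including) the first non-vanishing order, so the sign of that order transfers from one curve to the other and then feeds into Theorem \ref{theorem2}.

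The only real obstacle here is not analytical but organizational: one must verify that the vanishing of the $\int_{a_{AH}(r)}^{0}$ correction in Eq.(\ref{TSIN}) propagates to every order of the Taylor expansion, so that the agreement between $t_s$ and $t_{AH}$ is not accidentally limited to the zeroth and first orders already used in Theorem \ref{theorem1}. Once this regularity check is carried out — essentially a Leibniz-rule computation using $\mathcal{M}(0,a)r^2 = a_{AH}$ near $r=0$ — both the corollary and its higher-order refinement drop out with no further work.
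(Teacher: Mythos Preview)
Your core argument is correct and matches the paper's own proof: both invoke Eq.~(\ref{WWW}) from Theorem~\ref{theorem1} to identify $\chi(0)$ with $\Psi(0)$, then feed this into Eq.~(\ref{X}) of Theorem~\ref{theorem2} to conclude $x_0>0 \iff \chi(0)>0 \iff \Psi(0)>0$. The extra paragraphs on higher-order coefficients are surplus to the corollary as stated --- that material is what Theorem~\ref{theorem3} handles separately --- so you can simply drop them here.
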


\begin{proof}
It can be seen from Eq.(\ref{X}), that $x_0 >0 \iff \chi (0)>0$ as $\mathcal{M}_0$ is a well-defined positive constant. Now, using Eq.(\ref{WWW}), it can also be seen that $x_0 >0 \iff \lim_{r\to 0} \frac{dt_s (r)}{dr} = \lim_{r\to 0} \frac{dt_{AH} (r)}{dr} >0$. Therefore, proving the positivity of the of the tangent of either the AH or the singularity curve is enough to show that the central singularity is necessarily locally naked.
\end{proof}

\subsection{Higher order derivatives of the AH curve and its relationship to the local visibility of the singularity}
If the first order derivative of the AH and the singularity curve vanish, then it is necessary to look at the higher order derivatives of the same. This becomes important in the studies where a similar approach is needed, in probing the stability or the genericity of the resultant singularity \cite{Satin:2014ima}. In these studies, the first order derivative might vanish or even the higher orders, depending on the order of perturbation considered in these problems. For such cases, it is important to look at these higher order derivatives of the AH. With this, we reach our third and final theorem:
\begin{prop}\label{theorem3}
If $\chi (0)$ vanishes for any physical scenario, the local visibility of the central singularity at the end of a UGC is ensured by the positivity of the next non-zero higher order term in $r$ in the expansion of the AH curve.
\end{prop}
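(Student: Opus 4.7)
The plan is to iterate the arguments of Theorems~\ref{theorem1} and~\ref{theorem2} to the next order in $r$. I first extend the Taylor expansion of Eq.~(\ref{texpansion}) to
\[
t(a,r) = t(a,0) + r\chi_1(a) + r^2\chi_2(a) + \cdots,
\]
with $\chi_1\equiv \chi$, so $t_s(r)=t(0,r)$ and $t_{AH}(r)=t(a_{AH}(r),r)$; from Eq.~(\ref{mots}) one has $a_{AH}(r)=\mathcal{M}_0(0)\,r^2+\mathcal{O}(r^3)$. A direct computation gives
\[
t_{AH}(r)-t_s(r) = -\int_0^{a_{AH}(r)}\frac{\sqrt{s}\,ds}{\sqrt{s b_0 e^{rA+\nu}+e^{2\nu}(sg+\mathcal{M})}} = \mathcal{O}(r^3),
\]
so $\chi_k(0)$ for $k\le 2$ is the common coefficient of $t_s$ and $t_{AH}$. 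In particular, under the hypothesis $\chi_1(0)=0$ the next non-vanishing coefficient of $t_{AH}$, sitting at order $r^2$, must equal $\chi_2(0)$, the corresponding coefficient of the singularity curve.

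Next I would generalize Theorem~\ref{theorem2} by modifying the substitution. The choice $u=r^{5/3}$ was tuned to the leading contribution $a'\sqrt{a}\to \sqrt{\mathcal{M}_0}\chi_1(0)$ from Eq.~(\ref{aprime}). When $\chi_1(0)=0$, the extended form reads
\[
a'\sqrt{a} = 2\sqrt{\mathcal{M}_0}\,\chi_2(0)\,r+\mathcal{O}(r^2),
\]
and a scaling balance along the candidate outgoing trajectory, on which $t-t_{s0}\sim \chi_2(0) r^2$ and hence $a\sim r^{4/3}$, $R\sim r^{7/3}$, selects $u=r^{7/3}$. I would re-derive Eq.~(\ref{dRdu}) under this substitution, re-check the bullet-list limits of $G$, $H$ and $F/R$ in the new scaling regime to confirm that the prefactor $(1-F/R)/[\sqrt{G}(\sqrt{G}+\sqrt{H})]$ still tends to unity, and impose the fixed-point condition $x_0=\lim R/u=\lim dR/du$. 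The resulting algebraic equation is the analogue of Eq.~(\ref{X}),
\[
x_0 = \bigl(\tfrac{3}{2}\sqrt{\mathcal{M}_0}\,\chi_2(0)\bigr)^{2/3}.
\]
Thus $x_0>0\Leftrightarrow \chi_2(0)>0$, and by the first paragraph this is the positivity of the leading non-vanishing coefficient of $t_{AH}(r)$. The subsequent corollary of Theorem~\ref{theorem2} then delivers local nakedness. The argument iterates to higher orders by replacing $2$ with the first nonzero index $n$ and $7/3$ with $\alpha=(2n+3)/3$, giving formally $x_0=(\tfrac{3}{2}\sqrt{\mathcal{M}_0}\,\chi_n(0))^{2/3}$.

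The principal obstacle is carrying out the generalized scaling cleanly, because the candidate geodesic at order $n$ satisfies $a\sim r^{2n/3}$ while the apparent horizon locus sits at $a_{AH}\sim r^2$; for $n=2$ the geodesic is safely outside the trapped region, whereas for $n\ge 3$ one must verify separately that the candidate lies outside $a_{AH}$ (equivalently, that $F/R\to 1$ does not kill the reduction), so the argument either degenerates or demands an auxiliary comparison with the AH locus. A secondary subtlety is that the coefficient-matching in the first paragraph breaks at order $r^3$ because of the explicit offset $-\tfrac{2}{3}\mathcal{M}_0\,r^3$; extending the theorem beyond the $r^2$ case therefore requires either a direct null-geodesic derivation bypassing the matching, or explicit tracking of this offset in identifying the AH coefficient with $\chi_n(0)$.
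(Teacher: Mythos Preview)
Your approach mirrors the paper's: expand $t(a,r)$ to order $r^n$, keep $\alpha$ free in $u=r^\alpha$, and select $\alpha$ so that the surviving contribution $n\,r^{n-1}\chi_n(0)\,r^{(5-3\alpha)/2}$ in the root equation is finite, yielding $x_0=\bigl(\tfrac32\sqrt{\mathcal M_0}\,\chi_n(0)\bigr)^{2/3}$; your value $\alpha=(2n+3)/3$ agrees with the paper's explicit cases $\alpha=7/3$ and $\alpha=3$ (the printed general formula $\alpha=(2n+1)/3$ in Eq.(\ref{alpha}) appears to be a slip). The two refinements you flag---the explicit $\mathcal O(r^3)$ offset between $t_{AH}$ and $t_s$, and the breakdown of the $F/R\to 0,\ H\to 0$ limits along the candidate geodesic once $n\ge 3$---go beyond what the paper itself verifies, so your proposal is at least as complete as the original proof.
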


\begin{proof}
Following from Eq.(\ref{texpansion}), the time-curve for the collapse can be expanded with all higher order terms up to order $n$:  
\ba \label{texpansionHO}
t(a,r) &=& t (a,0) + r \chi (a)+ r^2 \chi_2 (a) + r^3 \chi_3 (a)\nonumber \\ &+&.....+ r^n \chi_n (a) + \mathcal{O} (r^{n+1}),
\ea
where, 
\be 
\chi_n (a) \equiv \frac{1}{n!} \left(\frac{\partial^n t (a,r)}{\partial r^n}\right)_{r=0}.
\ee
In the following, we will discuss the case where $\chi (0)=0$ and $\chi_2 (0)\neq 0$ and derive the necessary and sufficient condition for the central singularity to be at least locally naked. By differentiating Eq.(\ref{CHI}), $\chi_2$ can be calculated as:
\ba 
\chi_2 (a) = \frac{3}{8} \int_a^1 \sqrt{a} da   \frac{(ab_1 + ag_1 +\mathcal{M}_1 (a))^2}{(ab_{00}+ag_0 + \mathcal{M}_0 (a))^\frac52}\nonumber \\
-\frac{1}{4} \int_a^1 \sqrt{a} da \frac{(ab_2 + ag_2 +\mathcal{M}_2(a))}{{(ab_{00}+ag_0 + \mathcal{M}_0 (a))}^\frac32},
\ea 
where, $b_2 = b_{0,rr}\vert_{r=0}$, $\mathcal{M}_2 (a) = \mathcal{M},_{rr} (r,a) \vert_{r=0}$ and $g_2 = g,_{rr}\vert_{r=0}$ following the convention used in Eq.(\ref{fns}). Now, going back to Eq.(\ref{RNG1}) and keeping the value of $\alpha$ arbitrary, this equation can be rewritten as:
\be \label{dRdu1}
\frac{dR}{du} =\frac{1}{\alpha} \LB \frac{R}{u} + \frac{a'\sqrt{a}r^{\frac{5-3\alpha}{2}}}{\sqrt{\frac{R}{u}}} \RB \LB \frac{1-\frac{F}{R}}{\sqrt{G} (\sqrt{G}+\sqrt{H})}\RB.
\ee 
Using Eq.(\ref{aprime}) and repeating the method to derive it, the value for $a'\sqrt{a}$ can be calculated if the higher order derivatives of the time-curve does not vanish:
\ba \label{aprime1}
a' ~\sqrt{a} &=& \sqrt{ab_{00}+ a g_0 + \mathcal{M}_0 (a)}~ (\chi (a)+ 2r \chi_2 (a) \nonumber \\ &+& 3r^2 \chi_3 (a)+.....+ nr^{n-1} \chi_n (a) + \mathcal{O} (r^{n}) ).
\ea
In the limit of $r\to 0$ and $a\to 0$ this equation becomes: 
\ba \label{aPRIME}
 \lim_{\substack{t\to t_s\\ r\to0}} a'\sqrt{a} &=& \sqrt{ \mathcal{M}_0 (0)}~(\chi (0)+ 2r \chi_2 (0)+ 3r^2 \chi_3 (0)\nonumber \\ &+& ....+ nr^{n-1} \chi_n (0) + \mathcal{O} (r^{n}) ),
\ea
using the limiting values of the functions calculated in the discussion following Eq.(\ref{aprime}). Repeating the calculations preceding Eq.(\ref{X}) and using Eq.(\ref{dRdu1}) and Eq.(\ref{aPRIME}), the root equation for the outgoing null geodesic congruence is arrived at:
\ba\label{high}
{x_0}^\frac{3}{2} &=& \lim_{r\to 0} \frac{1}{\alpha -1} (\chi (0)+ 2r \chi_2 (0)+ 3r^2 \chi_3 (0) +....\nonumber \\ &+& nr^{n-1} \chi_n (0) + \mathcal{O} (r^{n}) ) r^{\frac{5-3\alpha}{2}} \sqrt{ \mathcal{M}_0 (0)}. 
\ea 
From this equation, it is evident that if $\chi (0)=0$ and $\chi_2 (0)\neq 0$, the second-order behaviour cannot be correlated to the local visibility of the singularity if $\alpha = \frac53$. But if $\alpha = \frac73$, then the preceding equation becomes:
\be \label{HO1}
{x_0} = {\LB \frac{3}{2} \sqrt{\mathcal{M}_0} \chi_2 (0)\RB}^\frac23,  
\ee
which is just the same in its functional form as Eq.(\ref{X}). From Eq.(\ref{high}) it can be seen, if $\chi(0)=\chi_2(0)=0$ but $\chi_3(0)\neq 0$, then $\alpha = 3$ and in this case, the root equation becomes:
\be \label{HO2}
{x_0} = {\LB \frac{3}{2} \sqrt{\mathcal{M}_0} \chi_3 (0)\RB}^\frac23. 
\ee 
From Eq.(\ref{HO1}) and Eq.(\ref{HO2}), it is clear that the root equation of the outgoing null geodesic congruence gives a positive solution if the lowest non-zero derivative of the AH curve is positive. This result holds for all higher order derivatives as well for:
\be \label{alpha}
\alpha = \frac{2n+1}{3};\{n \in \mathbb{N}\;\vert \; n\geq 2\}.
\ee 
Therefore, the result of theorem \ref{theorem2} holds for the first non-vanishing derivative in the expansion of the AH curve, near the central singularity, given the mass function and the velocity distribution functions of the collapsing cloud follows appropriate continuity and differentiability conditions. And in this neighbourhood, it is enough to just look at the positivity of the first non-zero term of the expansion of the AH curve to comment on the local visibility of the singularity.\end{proof}
Continuity and differentiability of these functions ensure that $\chi_n (0)$ does not blow up. For this, the lowest order of $\chi_n (a)$ has to be positive in $r$ in the limit $(a,r)\to (0,0)$:
\ba
\chi_n (a ) &\sim & \mathcal{O} \left(r^{\frac{3\alpha -1}{2}-j}\right),\\
\text{with}\;\;\;  \frac{3\alpha -1}{2} &\geq & j.
\ea 
Given all these conditions follow through, the positivity of the AH curve or the singularity curve expansions in its lowest non-vanishing order is the necessary and sufficient condition for the local visibility of an end-state singularity.

\section{Comments on the causal nature of the central singularity}
This discussion remains incomplete without commenting on the nature of the central singularity, whether it is time-like, null or space-like. For that, the line element (with $d\theta = d\phi =0$) of Eq.(\ref{gensph}) has to be studied in the limit $(t,r)\to (t_{s0},0)$. Ignoring the angular contribution does not affect the generality of the discussion by virtue of the spherical symmetry and it is far simpler in terms of calculation. Along the singularity curve:
\ba \label{ts0}
t_s - t_s (0) = {\LB \frac32 \sqrt{\mathcal{M}_0} \chi (0)\RB}^\frac23 r^\frac53,
\ea 
replacing Eq.(\ref{X}) in Eq.(\ref{singcurve}).
Using this, the line-element $-g_{tt}dt^2 + +g_{rr}dr^2$ is evaluated in the limit $(t,r)\to (t_{s0},0)$. Employing Eq.(\ref{GH}), Eq.(\ref{G}) and Eq.(\ref{b}):
\ba 
-g_{tt}dt^2 + +g_{rr}dr^2 &=& -e^{2\nu} {\LB\frac{dt}{dr} \RB}^2 dr^2 + e^{-2rA} dr^2\nonumber\\
&=& \frac53 r^\frac23 {\LB\frac32 \sqrt{\mathcal{M}_0 }\chi (0) \RB}^\frac43 (-dr^2 + dr^2)\nonumber\\
&=&0.
\ea 
So, along the singularity curve, the line element is null, making the singularity null as well. Using theorem \ref{theorem1}, it can now be safely said that if this curve is increasing, then the AH curve is increasing as well. And the positivity of its first non-vanishing order ensures the positivity of the first non-vanishing order of the AH curve. In such a case, using theorems \ref{theorem2} and \ref{theorem3}, it can be established that this singularity is at least locally naked.

\section{Discussion}
In this work, we have commented on the nature of the AH and its relationship with the local visibility of the central singularity at the end of a collapse. This topic is more delicate compared to the global visibility of a singularity, which is formed by the complete avoidance of trapped regions and the formation of an AH. Therefore, this treatment can be of help trying to study the visibility of singularities in similar set-ups. Especially for figuring out critical gravitational collapse \cite{Choptuik, Abrahams, Hamad:2004sw, Garfinkle1998, Brady1995}, where the phenomena is only known, not the exact cause behind it. Also, our treatment is fairly generic, with the only assumptions being the matter being of type-I and the cloud being spherically symmetric. Despite this, there are few nuances that we would like to point towards now for future works:
\begin{itemize}
\item Just the analytical modeling of a locally naked singularity is not enough, if the detection of such objects is impossible. Lately, {\it Gravitational Wave (GW)s} have become one important way to do the same \cite{Bailes:2021tot}. However, generation of GWs is not possible in spherical symmetry, not without some aspherical perturbation or other perturbations \cite{Betschart:2004uu, Zibin:2008vj}. Naked singularities have been observed in non-spherical collapsing objects as well \cite{Shapiro:1992heg, DeLaCruz:1970kk, Price:1971fb, Joshi:1996qc}. Therefore, a natural curiosity will be there to find whether the visibility of a singularity can be related to the AH, even in the cases of non-spherical collapses.
\item Another possible way to detect singularities could be through GW observations of merger events \cite{LIGOScientific:2024elc}. However, such merger events are correlated to theoretical predictions through sophisticated numerical simulations that need global slicings of the space-times \cite{LIGOScientific:2019lyx}. These methods, in their current form, cannot be applied to a space-time containing a naked singularity, as the space-time is not globally hyperbolic. It would also be important and intriguing to resolve this issue, either by developing numerical techniques that work with local slicings \cite{EllisLRS, Clarkson:2002jz} or developing a new analytical methodology that is fit for numerics. 
\item Another possible way to observe such singularities is by studying the electro-magnetic radiation emitted by objects circling or orbiting the singularities and producing luminosity \cite{Kong:2013daa, Joshi:2013dva}. It is not hard to understand that these rays will travel through other space-times, other than the one described here, before reaching the observers. Such space-times have to be matched with the space-time described here to comment on the actual nature of these rays. This has been done to analytically comment on such light rays \cite{Goswami:2006ph, Mosani:2023vtr, Koushiki:2025dax}. These has to be verified by observational data to comment concretely on the validity of CCC.
\item Another debate surrounding UGC is the genericity \cite{psj} and stability \cite{senovilla, Christ1} of the end-states produced out of them. Despite the absence of a rigorous definition for any of these, these are often mentioned as the reasons for discarding the existence of naked singularities which are not the same as the black-holes. This is a serious concern and interesting problems can be developed if these issues are seriously studied. To probe these issues, one would need to understand the topologies of the function space constituting the initial configuration of the UGC. Then by defining a measure on this space, definitions of genericity and stability can be argued upon. In this work, the function space comprising of $F$, $G$, $\nu$, $\mathcal{M}$, is fairly general and can be treated as the starting point of such a study.
\item This study only takes into account the collapse of a type-I matter cloud, but there are also type-II matter fields. The UGC of these matter fields can also be studied in a similar manner to put an overarching comment on CCC or its non-existence. It is to be noted that this matter field has also been studied in a fairly general manner and one possible end-state of such a collapse has been shown to be a naked singularity \cite{Ghosh:2002mf, Brassel:2021}. It would be interesting to find a one to one correspondence between the dynamics of the AH and the visibility of the singularity in this case as well. This work can be a starting point of such a study as well.
\end{itemize}
To summarise, this work points out   that the visibility of the singularity can be commented upon just by studying the co-efficients of different non-vanishing orders of the AH near it. With these results that uses an analytic framework derived with a fairly general configuration space, the case of a naked singularity must be regarded as an equally probable outcome of a UGC, just like the black-hole. Just like general relativity gives counter-intuitive results in black-holes, a naked singularity can also be regarded as a similar counter-intuitive result. Moreover, focus should be put upon figuring out the physical reasons behind the delaying or non-occurrence of the AH. The reasons can be many, including non-zero pressure \cite{Mosani:2023vtr, Goswami:2002ds} or non-vanishing shear \cite{Joshi:2001xi, Joshi:2004tb}. Using the results of this work, the visibility of a singularity might be investigated and based on that the physical reasons behind that can be probed in future.


\begin{thebibliography}{99}
\bibitem{HE}
S. W. Hawking and G. F. R.
Ellis, {\it The Large Scale Structure of Spacetime},
Cambridge University Press, 1973.

\bibitem{GosEllis} 
  R.~Goswami and G. F. R. Ellis,
  {\it Almost Birkhoff Theorem in General Relativity},
  Gen.\ Rel.\ Grav.\  {\bf 43}, 2157 (2011)

\bibitem{CCC}
R. Penrose, {\it Gravitational Collapse: The Role of General Relativity}, Riv. Nuovo Cimento, Num. Sp. I, 1969.

\bibitem{Yodzis:1973gha}
P.~Yodzis, H.~J.~Seifert and H.~M{\"u}ller zum Hagen,
{\it On the occurrence of naked singularities in general relativity},
Commun. Math. Phys. \textbf{34} (1973) no.2, 135-148

\bibitem{Christodoulou:1984mz}
D.~Christodoulou,
{\it Violation of cosmic censorship in the gravitational collapse of a dust cloud},
Commun. Math. Phys. \textbf{93} (1984), 171-195

\bibitem{Newman:1985gt}
R.~P.~A.~C.~Newman,
{\it Strengths of naked singularities in Tolman-Bondi space-times},
Class. Quant. Grav. \textbf{3} (1986), 527-539

\bibitem{Newman:1988cr}
R.~P.~A.~C.~Newman and P.~S.~Joshi,
{\it Constraints on the Structure of Naked Singularities in Classical General Relativity},
Annals Phys. \textbf{182} (1988), 112-119

\bibitem{Bluman}
G.~W.~Bluman and J.~D.~Cole, {\it Similarity Methods for Differential Equations}, Springer-Verlag (1974).

\bibitem{Strogatz}
 S.~H.~Strogatz, {\it Nonlinear Dynamics and Chaos: With Applications to Physics, Biology, Chemistry, and Engineering}, CRC Press (2018).

\bibitem{Joshi:1991aa}
P.~S.~Joshi and I.~H.~Dwivedi,
``{\it Naked singularities in spherically symmetric inhomogeneous Tolman-Bondi dust cloud collapse},''
Phys. Rev. D \textbf{45}, 2147-2154 (1992).

\bibitem{Joshi:1993zg}
P.~S.~Joshi and I.~H.~Dwivedi,
{\it Naked singularities in spherically symmetric inhomogeneous Tolman-Bondi dust cloud collapse},
Phys. Rev. D \textbf{47} (1993), 5357-5369

\bibitem{Joshi1993}
P.~S.~Joshi, I.~H.~ Dwivedi, "{\it The Structure of naked singularity in selfsimilar gravitational collapse II}",
Lett.Math.Phys. \textbf{27}, 235 (1993).

\bibitem{Christ}
D.~Christodoulou, 
{\it The formation of black holes and singularities in spherically symmetric gravitational collapse}
Commun.Pure Appl.Math. {\bf 44} 3, 339-373 (1991).

\bibitem{Dwivedi:1996wf}
I.~H.~Dwivedi and P.~S.~Joshi,
{\it Initial data and the final fate of inhomogeneous dust collapse},
Class. Quant. Grav. \textbf{14} (1997), 1223-1236

\bibitem{Ashtekar:2002}
A.~Ashtekar, B.~Krishnan, 
{\it Dynamical Horizons: Energy, Angular Momentum, Fluxes and Balance Laws},
Phys. Rev. Lett. 89, 261101 (2002).

\bibitem{Goswami:2002ds}
R.~Goswami and P.~S.~Joshi,
{\it What role pressures play to determine the final end state of gravitational collapse?},
Class. Quant. Grav. \textbf{19} (2002), 5229-5234

\bibitem{Joshi:2004tb}
P.~S.~Joshi, R.~Goswami and N.~Dadhich,
{\it Why do naked singularities form in gravitational collapse? II},
Phys. Rev. D \textbf{70} (2004), 087502

\bibitem{Goswami:2004ne}
R.~Goswami and P.~S.~Joshi,
{\it Naked singularity formation in scalar field collapse},
[arXiv:gr-qc/0410144 [gr-qc]].

\bibitem{Koushiki:2025dax}
Koushiki, P.~S.~Joshi and S.~Bhattacharyya,
{\it On Formation of Primordial Naked Singularities},
[arXiv:2501.14695 [gr-qc]].

\bibitem{Joshi:2001xi} 
  P.~S.~Joshi, N.~Dadhich and R.~Maartens,
  {\it Why do naked singularities form in gravitational collapse?},
  Phys.\ Rev.\ D {\bf 65}, 101501 (2002)
  
\bibitem{Goswami:2006ph}
R.~Goswami and P.~S.~Joshi,
{\it Spherical gravitational collapse in N-dimensions},
Phys. Rev. D \textbf{76} (2007), 084026

\bibitem{Tipler} F. J. Tipler, {\it Singularities in conformally flat spacetimes} Phys. Lett. \textbf{64}A, 8 (1977).

\bibitem{Mosani:2023vtr}
K.~Mosani, Koushiki, P.~S.~Joshi, J.~V.~Trivedi and T.~Bhanja,
{\it Gravitational collapse of scalar and vector fields},
Phys. Rev. D \textbf{108} (2023) no.4, 044049

\bibitem{Mosani:2020ena}
K.~Mosani, K.~Mosani, D.~Dey, D.~Dey, P.~S.~Joshi and P.~S.~Joshi,
{\it Strong curvature naked singularities in spherically symmetric perfect fluid collapse},
Phys. Rev. D \textbf{101} (2020) no.4, 044052
[erratum: Phys. Rev. D \textbf{107} (2023) no.6, 069903]

\bibitem{Goswami:2002he} 
  R.~Goswami and P.~S.~Joshi,
  {\it Spherical dust collapse in higher dimensions},
  Phys.\ Rev.\ D {\bf 69}, 044002 (2004)

\bibitem{Goswami:2004gy} 
  R.~Goswami and P.~S.~Joshi,
  {\it Cosmic censorship in higher dimensions},
  Phys.\ Rev.\ D {\bf 69}, 104002 (2004)

\bibitem{Goswami:2007}
R.~Goswami, P.~S.~Joshi,
{\it Gravitational collapse of a self-interacting scalar field}
Mod.Phys.Lett.A {\bf 22}, 65-74 (2007).

\bibitem{Misner:1964}
C.~W.~Misner, D.~H.~Sharp, 
{\it Relativistic Equations for Adiabatic, Spherically Symmetric Gravitational Collapse}
Phys.\ Rev.\ D {\bf 136}, B571 (1964).

\bibitem{eardley-smarr1979}
Eardley, D. M., and Smarr, L., {\it Time functions in numerical relativity: Marginally bound dust collapse}, {Physical Review D} \textbf{19}, 2239–2259 (1979).

\bibitem{hellaby-lake1985}
Hellaby, C., and Lake, K., {\it Shell crossings and the Tolman model}, {The Astrophysical Journal} \textbf{290}, 381–387 (1985).

\bibitem{clarke-krolak1985}
Clarke, C. J. S., and Krolak, A., {\it Conditions for the occurrence of strong curvature singularities}, {Journal of Geometry and Physics} \textbf{2}, 127–143 (1985).

\bibitem{Berti:2009kk}
E.~Berti, V.~Cardoso and A.~O.~Starinets,
{\it Quasinormal modes of black holes and black branes},
Class. Quant. Grav. \textbf{26} (2009), 163001

\bibitem{Joshibook1} 
P.  S. Joshi, {\it Global Aspects in Gravitation and Cosmology}, Oxford University Press, 1993.

\bibitem{Joshibook2} 
 P. S. Joshi,{ \it Gravitational Collapse and Spacetime Singularities}, Cambridge University press, 2007.

\bibitem{Ashtekar:2003hk}
A.~Ashtekar and B.~Krishnan,
{\it Dynamical horizons and their properties},
Phys. Rev. D \textbf{68} (2003), 104030

\bibitem{NF}
P.~S.~Joshi and D.~Malafarina, {\it New Frontiers
in Gravitational Collapse
and Spacetime Singularities}, {\bf 255-271}
Springer, 2024.

\bibitem{Satin:2014ima}
S.~Satin, D.~Malafarina and P.~S.~Joshi,
{\it Genericity aspects of black hole formation in the collapse of spherically symmetric slightly inhomogeneous perfect fluids},
Int. J. Mod. Phys. D \textbf{25}, 1650023 (2016).

\bibitem{Choptuik}
M.~W.~Choptuik,
 {\it Universality and Scaling in Gravitational Collapse of a Massless Scalar Field},
 Phys. Rev. Lett. {\bf 70}, 1 (1993).

\bibitem{Abrahams}
A.~M.~Abrahams and C.~R.~Evans
{\it Critical behavior and scaling in vacuum axisymmetric gravitational collapse},
Phys. Rev. Lett. \textbf{70}, 2980 (1993).

\bibitem{Hamad:2004sw}
R.~S.~Hamade and J.~M.~Stewart,
{\it Critical Phenomena in Gravitational Collapse},
Class. Quant. Grav. \textbf{13}, 497-512 (1996).

\bibitem{Garfinkle1998}
D.~Garfinkle and G.~C.~Duncan,
{\it Scaling of curvature in subcritical gravitational collapse},"
Phys. Rev. D, \textbf{58}, 064024 (1998).

\bibitem{Brady1995}
P.~R.~Brady,
{\it Analytic example of critical behavior in scalar field collapse},
Class. Quant. Grav. \textbf{11}, 1255-1260 (1995).

\bibitem{Bailes:2021tot}
M.~Bailes, B.~K.~Berger, P.~R.~Brady, M.~Branchesi, K.~Danzmann, M.~Evans, K.~Holley-Bockelmann, B.~R.~Iyer, T.~Kajita and S.~Katsanevas, \textit{et al.}
{\it Gravitational-wave physics and astronomy in the 2020s and 2030s},
Nature Rev. Phys. \textbf{3} (2021) no.5, 344-366

\bibitem{Betschart:2004uu} 
  G.~Betschart and C.~A.~Clarkson,
  {\it Scalar and electromagnetic perturbations on LRS class II space-times},
  Class.\ Quant.\ Grav.\  {\bf 21}, 5587 (2004)

\bibitem{Zibin:2008vj} 
  J.~P.~Zibin,
  {\it Scalar Perturbations on Lemaitre-Tolman-Bondi Spacetimes},
  Phys.\ Rev.\ D {\bf 78}, 043504 (2008)

\bibitem{Shapiro:1992heg}
S.~L.~Shapiro and S.~A.~Teukolsky,
{\it Gravitational collapse of rotating spheroids and the formation of naked singularities},
Phys. Rev. D \textbf{45} (1992) no.6, 2006

\bibitem{DeLaCruz:1970kk}
V.~De La Cruz, J.~E.~Chase and W.~Israel,
{\it Gravitational collapse with asymmetries},
Phys. Rev. Lett. \textbf{24} (1970), 423-426

\bibitem{Price:1971fb}
R.~H.~Price,
{\it Nonspherical perturbations of relativistic gravitational collapse. 1. Scalar and gravitational perturbations},
Phys. Rev. D \textbf{5} (1972), 2419-2438

\bibitem{Joshi:1996qc}
P.~S.~Joshi and A.~Krolak,
{\it Naked strong curvature singularities in Szekeres space-times},
Class. Quant. Grav. \textbf{13} (1996), 3069-3074

\bibitem{LIGOScientific:2024elc}
A.~G.~Abac \textit{et al.} [LIGO Scientific, KAGRA and VIRGO],
{\it Observation of Gravitational Waves from the Coalescence of a 2.5{\textendash}4.5 M Compact Object and a Neutron Star},
Astrophys. J. Lett. \textbf{970} (2024) no.2, L34

\bibitem{LIGOScientific:2019lyx}
B.~P.~Abbott \textit{et al.} [LIGO Scientific and VIRGO],
{\it Search for gravitational waves from Scorpius X-1 in the second Advanced LIGO observing run with an improved hidden Markov model},
Phys. Rev. D \textbf{100} (2019) no.12, 122002
[erratum: Phys. Rev. D \textbf{104} (2021) no.10, 109903]

\bibitem{EllisLRS} G. F. R. Ellis: {\it The dynamics of pressure-free matter
in general relativity}. {\it Journ Math Phys} {\bf 8}, 1171 --
1194 (1967). H. van Elst  and G. F. R. Ellis, {\it Class. Quantum
Grav.} {\bf 13}, 1099 (1996), [gr-qc/9510044].

\bibitem{Clarkson:2002jz} 
  C.~A.~Clarkson and R.~K.~Barrett,
  {\it Covariant perturbations of Schwarzschild black holes},
  Class.\ Quant.\ Grav.\  {\bf 20}, 3855 (2003)

\bibitem{Kong:2013daa} 
  L.~Kong, D.~Malafarina and C.~Bambi,
  {\it Can we observationally test the weak cosmic censorship conjecture?}'',
  arXiv:1310.8376 [gr-qc].

\bibitem{Joshi:2013dva}
P.~S.~Joshi, D.~Malafarina and R.~Narayan,
{\it Distinguishing black holes from naked singularities through their accretion disc properties},
Class. Quant. Grav. \textbf{31}, 015002 (2014).

\bibitem{psj} 
  P.~S.~Joshi,
  {\it Visibility of a spacetime singularity},
  Phys.\ Rev.\ D {\bf 75}, 044005 (2007).

\bibitem{senovilla} 
J. M. M. Senovilla, {\it Remarks on the Stability Operator for MOTS}, Progress in Mathematical Relativity, Gravitation and Cosmology,
Springer Proceedings in Mathematics and Statistics Volume 60, 403 (2014).

\bibitem{Christ1}
 D.~Christodoulou,
 "{\it The Instability of Naked Singularities in the Gravitational Collapse of a Scalar Field}"
 Annals of Mathematics, {\bf 149}, 183
(1999).

\bibitem{Ghosh:2002mf}
S.~G.~Ghosh and N.~Dadhich,
{\it Gravitational collapse of type II fluid in higher dimensional space-times},
Phys. Rev. D \textbf{65} (2002), 127502

\bibitem{Brassel:2021}
B.~P.~Brassel, S.~D.~Maharaj and R.~Goswami,
{\it Inhomogeneous and Radiating Composite Fluids}
 Entropy  \textbf{23}, no.11, 1400 (2021).




\end{thebibliography}
\end{document}